\tikzset{
    graph/.style={
        every edge/.style={
            very thick,
            draw=black!75
        },
        every node/.style={
            rounded rectangle,
            minimum size=6mm,
            very thick,
            draw=black!50,
            top color=white,
            bottom color=black!20,
            font=\ttfamily
        }
    }
}
\newcommand{\SPACE}{\vspace*{0.65em}}
\newcommand{\children}{\mathrm{children}}
\newcommand{\parents}{\mathrm{parents}}
\newcommand{\bisim}{\approx}
\newcommand{\rank}{\mathit{rank}}
\newcommand{\Scan}{\textsc{Scan}}
\newcommand{\Sort}{\textsc{Sort}}
\newcommand{\ignore}[1]{}
\newcommand{\bsm}{bisimilarity}
\newcommand{\id}{\mathit{id}}
\newtheorem{theorem}{Theorem}
\newtheorem{proposition}{Proposition}
\begin{document}

\title{I/O efficient bisimulation partitioning on very large directed acyclic graphs}

%
%
%
%
%

\numberofauthors{3} 
%
\author{
%
%
\alignauthor
Jelle Hellings\\
       \affaddr{Hasselt University}\\
       \affaddr{Belgium}\\
       \email{exbisim@jhellings.nl}
\alignauthor
George H.L.\ Fletcher\\
       \affaddr{Eindhoven University of Technology}\\
       \affaddr{The Netherlands}\\
       \email{g.h.l.fletcher@tue.nl}
\alignauthor
Herman Haverkort\\
       \affaddr{Eindhoven University of Technology}\\
       \affaddr{The Netherlands}\\
       \email{cs.herman@haverkort.net}
}

\date{\today}

\maketitle
\pagestyle{plain}
\begin{abstract}

In this paper we introduce the first efficient external-memory algorithm to
compute the bisimilarity equivalence classes of a directed acyclic graph (DAG).
DAGs are commonly used to model data in a wide variety of practical
applications, ranging from XML documents and data provenance models, to web
taxonomies and scientific workflows.   In the study of efficient reasoning over
massive graphs, the notion of node bisimilarity plays a central role.  For
example, grouping together bisimilar nodes in an XML data set is the first step
in many sophisticated approaches to building indexing data structures for
efficient XPath query evaluation.    To date, however, only internal-memory
bisimulation algorithms have been investigated.   As the size of real-world DAG
data sets often exceeds available main memory, storage in external memory
becomes necessary.  Hence, there is a practical need for an efficient approach
to computing bisimulation in external memory.

Our general algorithm has a worst-case  IO-complexity of $O(\Sort(|N| + |E|))$, where
$|N|$ and $|E|$ are the numbers of nodes and edges, resp., in the data graph and
$\Sort(n)$ is the number of accesses to external memory needed to sort an input
of size $n$.  We also study specializations of this algorithm to common
variations of bisimulation for tree-structured XML data sets.  We empirically
verify efficient performance of the algorithms on graphs and XML documents
having billions of nodes and edges, and find that the algorithms can process
such graphs efficiently even when very limited internal memory is available. The
proposed algorithms are simple enough for practical implementation and use, and
open the door for further study of external-memory bisimulation algorithms.   To
this end, the full open-source C++ implementation has been made freely
available.

\end{abstract}

\section{Introduction}\label{sec:intro}

Data modeled as directed acyclic graphs (DAGs) arise in a
diversity of practical applications such as biological and biomedical ontologies
\cite{obo}, web folksonomies \cite{palla},  scientific workflows \cite{yu},
semantic web schemas \cite{ChristophidesKPST04}, business process modeling
\cite{Ben-AriMV09,deutch}, data provenance modeling
\cite{Moreau10,Muniswamy-Reddy:2010}, and the widely adopted XML standard
\cite{fbbook}.  It is anticipated that the variety, uses, and quantity of
DAG-structured data sets will only continue to grow in the future.

In each of these application areas, efficient searching and querying on the data
is a basic challenge.  In reasoning over massive data sets, typically index data
structures are computed and maintained to accelerate processing. These indexes
are essentially a reduction or summary of the underlying data.  Efficiency is
achieved by performing reasoning over this reduction to the extent possible,
rather than directly over the original data.

Many approaches to indexing have been investigated in preceding decades.
Reductions of data sets typically group together data elements based on their
shared values or substructures in the data.  In graphs, the notion of {\em
bisimulation equivalence} of nodes has proven to be an effective means for
indexing (e.g.,
\cite{fbbook,FletcherGWGBP09,linear,fbkaushik,akindex,indexbb,fbdisk}).
Bisimulation, which is a fundamental notion arising in a surprising range of
contexts \cite{bisim}, is based on the structural similarity of subgraphs.
Intuitively, two nodes are bisimilar to each other if they cannot be
distinguished from each other by the sequences of node labels that may appear on
the paths that start from these nodes, as well as from each of the nodes on those paths.
Grouping bisimilar nodes is known as {\em bisimulation partitioning}.  Blocks of
bisimilar nodes are then used as the basis for constructing indexing data
structures supporting efficient search and querying over the data.

Efficient {\em internal}-memory solutions for computing bisimulation partitions
have been investigated (e.g., \cite{fastbisim,linear,pt}).  To scale to
real-world data sets such as those discussed above, it becomes necessary to
consider DAGs resident in {\em external} memory.  In considering algorithms for
such data, the primary concern is to minimize disk IO operations due to the high
cost involved, relative to main-memory operations, in performing reads and writes
to disk.

Due to the random access nature of internal-memory algorithms, the design of
external-memory algorithms which minimize disk IO typically requires a
significant departure from approaches taken for internal memory solutions \cite{meyer}.
In particular, state-of-the-art internal-memory bisimulation algorithms can not be directly adapted to IO-efficient external-memory algorithms due to their inherent random access behaviour.
While a study has been made on storing and querying bisimulation partitions
on disk \cite{fbdisk}, there has been to our knowledge no approach developed to
date for efficiently computing bisimulation partitioning in external memory.

Motivated by these observations, in this paper we give the first IO-efficient
external-memory bisimulation algorithm for DAGs.
Our algorithm has a worst-case IO-complexity of $O(\Sort(|N| + |E|))$,
where $|N|$ and $|E|$ are the number of nodes and edges, resp., in the data
graph and $\Sort(n)$ is the number of accesses to external memory needed
to sort an input of size $n$.
Efficiency is achieved by
intelligent organization of the graph on disk, and by sophisticated processing
of the graph using global and local reorganization and careful staging and use
of local bisimulation information.  We establish the theoretical efficiency of
the algorithm, and demonstrate its practicality via a thorough empirical
evaluation on data sets having billions of nodes and edges.

Our algorithm is simple enough for practical implementation and use, and to
serve as the basis for further study and design of external-memory bisimulation
algorithms.  For example, we also develop in this paper specializations of our
algorithm for computing common variations of bisimulation for tree-structured
graphs in the form of XML documents.  Furthermore, the complete
implementation is open-source and available for download.

We proceed in the paper as follows.   In the next section, we present basic
definitions concerning our data model, bisimulation equivalence, and the
standard external-memory computational model.   In Section
\ref{sec:partitioning}, we then present and theoretically analyze our
external-memory bisimulation algorithm.  In Section \ref{sec:xml}, we show how
to specialize our general algorithm for various bisimulation notions proposed
for XML data.  In Section \ref{sec:exp}, we then present a
thorough empirical analysis of our approach, and conclude in Section
\ref{sec:conclude} with a discussion of future directions for research.

\section{Preliminaries}
\label{sec:prelim}

\subsection{Graphs and bisimilarity}



In the context of this paper, a graph $G$ is a triple $G = \langle N, E, l\rangle$, where $N$ is a finite set of nodes, $E \subseteq N \times N$ is a directed edge relation, and $l$ is a function with domain $N$ that assigns a label $l(n)$ to every node $n \in N$.
With a slight abuse of terminology, we call $n$ a child of $m$, and $m$ a parent of $n$, if and only if $G$ contains an edge $(m,n)$.
Let $\children(m)$ be the set of all children of $m$, and let $\parents(n)$ be the set of all parents of $n$. Note that in our work we only
consider acyclic graphs. Furthermore, we assume that the node set $N$ is ordered in reverse topological order, that is, children always precede their parents in the order.
Assuming a topological ordering is standard in the design of external 
memory DAG algorithms \cite{meyer}.   Indeed, real world data is often 
already ordered (e.g., XML documents), and, furthermore, practical 
approaches to topological sorting of massive data sets are available 
\cite{toposort}.

\begin{definition}\label{def:bisim-nodes}
Let $G_1 =\langle N_1, E_1, l_1\rangle$ and $G_2 = \langle N_2, E_2, l_2\rangle$ be two, possibly the same, graphs.
Nodes $n_1 \in N_1$ and $n_2 \in N_2$ are \emph{bisimilar} to each other, denoted $n_1 \bisim n_2$, 
if and only if:
\begin{enumerate}
    \item the nodes have the same label: $l_1(n_1) = l_2(n_2)$;
    \item for every node $n'_1 \in \children(n_1)$ there is a node $n'_2 \in \children(n_2)$ such that $n'_1 \bisim n'_2$, and:
    \item For every node $n'_2 \in \children(n_2)$ there is a node $n'_1 \in \children(n_1)$ such that $n'_1 \bisim n'_2$.
\end{enumerate}
\end{definition}

We can extend this notion to complete graphs as follows:


\begin{definition}\label{def:bisim-graphs}
Let $G_1 =\langle N_1, E_1, l_1\rangle$ and $G_2=\langle N_2, E_2, l_2\rangle$ be graphs. Graph $G_1$ and $G_2$ are bisimilar to each other, denoted as $G_1 \bisim G_2$, if and only if:
\begin{enumerate}
    \item For every node $n_1 \in N_1$ there is a node $n_2 \in N_2$ such that $n_1 \bisim n_2$, and
    \item For every node $n_2 \in N_2$ there is a node $n_1 \in N_1$ such that $n_1 \bisim n_2$.
\end{enumerate}
\end{definition}

Figure \ref{fig:exbisimgraph} shows two graphs that are bisimilar to each other. The figure also shows with
dotted lines how the nodes of one graph are bisimilar to nodes of the other graph. Note that in this figure all nodes with label {\tt a} are bisimilar to each other, all nodes with label {\tt b} are bisimilar to each other, and all nodes with label {\tt c} are bisimilar to each other. 
Note, however, that this does not hold for nodes with label {\tt d}.

\begin{figure}
    \centering
        \begin{tikzpicture}[graph]
            \node (n1) at (1.5, 2) {a};
            \node (n2) at (2, 1) {b};
            \node (n3) at (1, 1) {b};
            \node (n4) at (1, 0) {c};
            \node (n5) at (2, 0) {c};
            \node (n9) at (0, 2) {d};
            \node (n10) at (0, 1) {d};

            \path[->] (n1) edge (n2) edge (n3)
                      (n2) edge (n4) edge (n5)
                      (n3) edge (n4)
                      (n9) edge (n3)
                      (n10) edge (n4);

            \node (n6) at (4, 2) {a};
            \node (n7) at (4, 1) {b};
            \node (n8) at (4, 0) {c};
            \node (n11) at (5, 2) {d};
            \node (n12) at (5, 1) {d};

            \path[->] (n6) edge (n7)
                      (n7) edge (n8)
                      (n11) edge (n7)
                      (n12) edge (n8);

            \path[dashed] (n6) edge[thick,draw=red!50] (n1)
                          (n7) edge[thick,draw=red!50,bend right=25] (n3)
                               edge[thick,draw=red!50,bend left=15]  (n2)
                          (n8) edge[thick,draw=red!50,bend right=25] (n4)
                               edge[thick,draw=red!50,bend left=15]  (n5)
                          (n9) edge[thick,draw=red!50,bend left=25] (n11)
                          (n10) edge[thick,draw=red!50,bend left=25] (n12);
        \end{tikzpicture}
    \caption{Two bisimilar directed acyclic graphs.}
    \label{fig:exbisimgraph}
\end{figure}
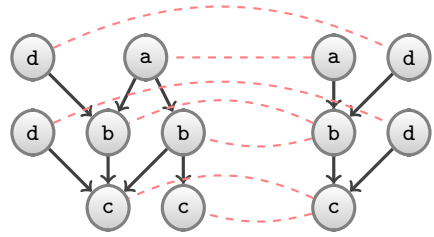

For each graph $G$ there is a unique smallest graph (having the fewest nodes) that is bisimilar to $G$; we call this smallest graph the \emph{maximum bisimulation graph} of $G$ and denote it by $G_\downarrow = \langle N_\downarrow, E_\downarrow, l_\downarrow\rangle$. In Figure \ref{fig:exbisimgraph}, the graph on the right is the maximum bisimulation graph of itself. It is also the maximum bisimulation graph of the graph on the left. 

In the context of this paper, the \emph{bisimilarity index} of a graph $G$ is a data structure that stores the maximum bisimilarity graph $G_\downarrow$ of $G$, and stores, for each node $n_\downarrow \in N_\downarrow$, the set of nodes $\{n \in N : n_\downarrow \bisim n\}$,
i.e., the bisimulation equivalence class of $n_\downarrow$.


A \emph{partition} ${\cal P}$ of a graph $G = \langle N, E, l\rangle$ is a subdivision of its nodes $N$ into a set of \emph{blocks} ${\cal P} = \{N_1, N_2, ...\}$ such that each block $N_i \in {\cal P}$ is a subset of $N$, the blocks are mutually disjoint, and their union is $N$. A \emph{bisimulation partition} of a graph $G$ is a partition ${\cal P}$ of $G$ such that the blocks of ${\cal P}$ are exactly the bisimilarity equivalence classes of $G$.




A partition ${\cal P}_1$ is a \emph{refinement} of a partition ${\cal P}_2$ if and only if for every $P_1 \in {\cal P}_1$ there is exactly one $P_2 \in {\cal P}_2$ such that $P_1 \subseteq P_2$.



The rank $\rank(n)$ of a node $n$ is defined as the maximum number of edges on any path that starts at $n$. It is easily proved by induction that $m \bisim n$ implies $\rank(m) = \rank(n)$, and thus the bisimulation partition is a refinement of the partition by rank.

\subsection{Analysis of external-memory algorithms}
In this paper, we investigate algorithms operating on data that does not fit in main memory. Therefore we need to use external memory, such as disks. In general external memory is slow. In particular, there is a high latency: it takes a lot of time to start reading or writing a random data item in external memory, but after that a large block that is consecutive in external memory can be read or written relatively fast. Thus the performance of algorithms using external memory is often dominated by the external-memory access patterns, where algorithms that read from and write to disk sparingly and in large blocks are at an advantage over algorithms that access the disk often for small amounts of data.

We shall use the following standard computer model to analyze the efficiency of our algorithms~\cite{Aggarwal}. Our computer has a fast memory with a limited size of $M$ units of data, and a slow, external memory of practically unlimited size. The computer has a fast processing unit that can operate on data in fast memory, but not on data in external memory. Therefore, during operation of any algorithm on this computer, data needs to be transferred between the two memories. This is done by moving data in blocks of size $B$; such a transfer is called an \emph{IO}. The block size $B$ is assumed to be large enough that the latency is dominated by the actual transfer times, and thus, the time spent on external-memory access is roughly proportional to the number of IOs.
%
%
%

The complexity of an external-memory algorithm can now be expressed as the (asymptotic) number of IOs performed by an algorithm, as a function of the input size and, possibly, other parameters. Clearly, reading or writing $n$ units of data that are (to be) stored consecutively in external memory takes $\Theta(\Scan(n)) = \Theta(\frac{n}{B})$ IOs. Sorting $n$ units of data that are consecutive in external memory takes $\Theta(\Sort(n)) = \Theta(\frac{n}{B}\log_{M/B}(\frac{n}{B}))$ IOs~\cite{Aggarwal}.

\section{Bisimulation Partitioning}
\label{sec:partitioning}

State of the art internal memory bisimulation algorithms are based on a process
of refinement introduced in the work of Paige and Tarjan (e.g.,
\cite{fastbisim,linear,pt}).  An initial partition of the nodes is picked (for
example: a partition based on label equivalence). A step-by-step refinement of
this initial partition is calculated by picking a single block $S$ of nodes from
the partition and stabilize all other blocks $B$ with respect to this group (by
splitting $B$ into a block of nodes that have children in $S$ and into a block
of nodes that do not have children in $S$). 

These refinement steps require unstructured random access to nodes and their children.
In an external memory setting, these accesses translate to high IO costs.
Therefore, it is not clear that state-of-the-art internal memory bisimulation
algorithms can be effectively adapted to an external-memory setting.  Hence, we
have chosen to investigate an alternative approach, inspired by the recent use
of node rank to accelerate internal memory refinement computations
\cite{fastbisim,linear}.

\subsection{Outline of our approach}

Suppose each bisimilarity equivalence class is identified by a unique number,
the \emph{bisimilarity identifier}. Let now the \emph{\bsm\ family} of a node be
the set of \bsm\ identifiers of its children, and let the \emph{\bsm\ decision
value} of a node be the combination of its rank, its label, and its \bsm\
family. Then, by Definition~\ref{def:bisim-nodes}, all nodes in the same
bisimilarity equivalence class have the same \bsm\ decision value, and each
bisimilarity equivalence class is uniquely identified by the \bsm\ decision
value of its nodes. 

The main idea behind our algorithmic approach is now to match \bsm\ identifiers
to nodes and their \bsm\ decision values, by processing the nodes in order of
increasing rank. Thus, when processing the nodes of any rank $r$, the \bsm\
identifiers of the children of these nodes are already known and can be used to
determine the \bsm\ decision values of the nodes of rank $r$, which can then be
sorted in order to assign a unique identifier to each different \bsm\ decision
value.

To implement this approach, we use an algorithm in two phases. In the first phase, we compute the ranks of all nodes and we sort the nodes by rank and label; in the second phase, we obtain the \bsm\ family for each node and sort nodes of equal rank and label by their families. Below we will explain how these phases can be implemented to run in $O(\Sort(|N|+|E|))$ IOs. After that, we will present an enhanced version of the algorithm where, in the first phase, nodes are sorted not only by rank and label, but also by a recursively defined hash value. This enhancement leads to a small increase in cost of the first phase, but may result in a substantial reduction in the size of the sets of nodes that need to be sorted in the second phase. Thus the enhanced algorithm still takes $O(\Sort(|N|+|E|))$ IOs, but with better constant factors in practice for certain types of inputs.

\subsection{Time-forward processing}
Our bisimulation partitioning algorithm has two phases in which information about nodes must be computed from information about their children: in the first phase, we need to compute a node's rank (which is one plus the maximum rank of its children); in the second phase, we need to assign \bsm\ identifiers to nodes based on the \bsm\ identifiers of their children. This would be relatively easy if we could access the children of any node $n$ when we process $n$, but in an external-memory setting, this could cause many IOs.

\begin{algorithm}[t]
\caption{Phase 1: sort by rank and label}\label{alg:sortbyrank}
\begin{algorithmic}[1]
\REQUIRE file of nodes $N$ as records $(\mathit{id}, \mathit{label})$, sorted by $\mathit{id}$;\\
         file of edges $E$ as records $(\mathit{parent}, \mathit{child})$, sorted by $\mathit{child}$;\\
         $(\id(n),\id(m)) \in E$ implies $\id(m) < \id(n)$.
\ENSURE  file of nodes $N'$ as records $(\mathit{id}, \mathit{origId}, \mathit{rank}, \mathit{label})$,
         sorted by $\mathit{id}$ and, simultaneously, by $(\mathit{rank}, \mathit{label})$;\\
         file of edges $E'$ as records $(\mathit{parent}, \mathit{child})$, sorted by $\mathit{child}$;\\
         $\rank(n) > \rank(m)$ implies $\id(m) < \id(n)$.
\SPACE
\STATE create empty file $\mathit{Ranks}$ of records $(\mathit{id}, \mathit{rank}, \mathit{label})$
\STATE create empty priority queue $Q$ of records\\$(\mathit{id}, \mathit{childsrank})$, ordered by $\mathit{id}$
\SPACE
\FORALL{$(n, \mathit{label}) \in N$, in order}
    \STATE $\mathit{rank} \gets 0$
    \WHILE{record at head of $Q$ has $\mathit{id} = n$}
        \STATE extract $(n, \mathit{childsrank})$ from $Q$
        \STATE $\mathit{rank} \gets \max(\mathit{rank}, \mathit{childsrank}+1)$
    \ENDWHILE
    \STATE append $(n, \mathit{rank}, \mathit{label})$ to $\mathit{Ranks}$
    \WHILE{next edge from $E$ has $\mathit{child} = n$}
        \STATE read $(\mathit{parent}, n)$ from $E$
        \STATE insert $(\mathit{parent}, \mathit{rank})$ in $Q$
    \ENDWHILE
\ENDFOR
\SPACE
\STATE sort $\mathit{Ranks}$ lexicographically by $\mathit{rank}, \mathit{label}$
\STATE copy $\mathit{Ranks}$ to $N'$ while assigning new node identifiers
\STATE copy $E$ to $E'$ while updating node identifiers in $E'$
\STATE sort $E'$ by $\mathit{child}$
\SPACE
\RETURN $(N', E')$
\end{algorithmic}
\end{algorithm}

We can remove explicit access to the children of a node by introducing an
IO-efficient supporting data structure that can be used to send information from
children to parents. This technique is called {\em time-forward processing}
\cite{chiang,meyer}. Time-forward processing can be used when nodes have unique
ordered node identifiers such that children have smaller identifiers than their
parents, and the nodes are stored in order of their identifiers, each node being
stored with its own identifier and those of its parents. The supporting data
structure should support two operations: (i) inserting a message addressed to a
given node, identified by its node identifier, and (ii) inspecting and removing all
messages addressed to the smallest node identifier that is currently present in
the data structure.

An algorithm that computes a value for each node depending
on the values of its children can now be implemented as follows. We compute
values for all nodes in order of their identifier, and whenever we compute a
node's value, we insert messages with that value in the supporting data
structure, addressing these messages to each of the node's parents. Thus, before
we process each node $n$, we can obtain the values computed for its children by
extracting all messages addressed to $n$ from the data structure. Each node
removes all messages addressed to it from the data structure, nodes with lower
identifiers are processed before nodes with higher identifiers, and no messages
are ever addressed to nodes that have already been processed; thus, when we want
to extract the messages addressed to $n$, these messages will be the messages
with the smallest node identifier currently in the data structure and they can
be extracted by an operation of type (ii).

The supporting data structure can be implemented as a priority queue. There are external-memory priority queues that, amortized over their life-time, perform $k$ operations of type (i) and (ii) in $\Theta(\Sort(k))$ IOs~\cite{Arge}.

\subsection{The bisimulation partitioning algorithm}\label{subsec:basicalgo}

\begin{algorithm}[t]
\caption{Details of line 13 and 14 of Algorithm~\ref{alg:sortbyrank}}\label{alg:updateidentifiers}
\begin{algorithmic}[1]
\STATE $\mathit{newId} \gets 0$
\STATE create empty file $R$ of records $(\mathit{origId}, \mathit{newId})$
\STATE create empty file $N'$ of records \\
       $(\mathit{newId}, \mathit{origId}, \mathit{rank}, \mathit{label})$
\FORALL{$(\mathit{origId}, \mathit{rank}, \mathit{label}) \in \mathit{Ranks}$, in order}
    \STATE $\mathit{newId} \gets \mathit{newId} + 1$
    \STATE append $(\mathit{origId}, \mathit{newId})$ to $R$
    \STATE append $(\mathit{newId}, \mathit{origId}, \mathit{rank}, \mathit{label})$ to $N'$
\ENDFOR
\SPACE
\STATE sort $R$ by $\mathit{origId}$
\STATE create empty file $E'$ of records $(\mathit{parent},\mathit{child})$
\STATE move read pointer of $E$ to beginning
\FORALL{$(\mathit{origId}, \mathit{newId}) \in R$, in order}
    \WHILE{next edge from $E$ has $\mathit{child} = \mathit{origId}$}
        \STATE read $(\mathit{parent}, \mathit{origId})$ from $E$
        \STATE append $(\mathit{parent}, \mathit{newId})$ to $E'$
    \ENDWHILE
\ENDFOR
\STATE sort $E'$ by $\mathit{parent}$
\STATE move pointers of $R$ and $E'$ to beginning
\FORALL{$(\mathit{origId}, \mathit{newId}) \in R$, in order}
    \WHILE{next edge from $E$ has $\mathit{parent} = \mathit{origId}$}
        \STATE read record $(\mathit{origId}, \mathit{child})$ from $E'$ and
        \STATE overwrite with record $(\mathit{newId}, \mathit{child})$
    \ENDWHILE
\ENDFOR
\end{algorithmic}
\end{algorithm}

Assume the input to our problem consists of a list of nodes $N$, storing a
unique node identifier and a label for every node, and a list of edges $E$,
specified by the node identifiers of their tails (parents) and their heads
(children). The list $N$ is sorted by node identifier, and the list $E$ is
sorted by head (child). Recall that the node identifiers are assumed to be such
that children always have smaller node identifiers than their parents.

Our basic bisimulation partitioning algorithm is now as follows. We use time-forward processing to compute the rank of each node, and make a copy of the list of nodes in which each node is annotated with its rank. Then we sort the nodes lexicographically, with their ranks as primary keys and their labels as secondary keys. We give each node a new identifier which is simply the position of the node in the resulting sorted list, and we replace the identifiers in $E$ accordingly, producing lists $N'$ and $E'$. We sort these new lists by the (new) node identifiers and by the (new) node identifiers of the heads, respectively. This completes the first phase of the algorithm. Pseudocode for this phase is given in Algorithm~\ref{alg:sortbyrank}.


\begin{algorithm*}
\caption{Phase 2: sort by \bsm\ equivalence class}\label{alg:exbisim}
\begin{algorithmic}[1]
\REQUIRE file of nodes $N'$ as records $(\mathit{id}, \mathit{origId}, \mathit{rank}, \mathit{label})$,
         sorted by $\mathit{id}$ and, simultaneously, by $(\mathit{rank}, \mathit{label})$;\\
         file of edges $E'$ as records $(\mathit{parent}, \mathit{child})$, sorted by $\mathit{child}$;\\
         $\rank(n) > \rank(m)$ implies $\id(m) < \id(n)$.
\ENSURE  file of nodes $B$ as records $(\mathit{origId}, \mathit{bisimId})$
\SPACE
\STATE create empty file $B$ of records $(\mathit{origId}, \mathit{bisimId})$
\STATE create priority queue $Q$ of records $(\mathit{id}, \mathit{childsBisimId})$, ordered by $\mathit{id}$
\STATE $\mathit{lastBisimId} \gets 0$
\STATE create empty file $\mathit{Group}$ of records $(\mathit{bisimFamily}, \mathit{origId}, \mathit{parents})$
\SPACE
\FORALL{$(n, \mathit{origId}, r, l) \in N'$, in order}
    \STATE create an empty list $\mathit{bisimFamily}$
    \WHILE{record at head of $Q$ has $\mathit{id} = n$}
        \STATE extract $(n, \mathit{childsBisimId})$ from $Q$
        \STATE append $\mathit{childsBisimId}$ to $\mathit{bisimFamily}$
    \ENDWHILE
    \STATE sort $\mathit{bisimFamily}$, removing doubles
    \STATE read all parents of $n$ from $E'$ and put them in a list $\mathit{parents}$
    \STATE append $(\mathit{bisimFamily}, \mathit{origId}, \mathit{parents})$ to $\mathit{Group}$
    \SPACE
    \IF{$N'$ has no more records with $\mathit{rank} = r$, $\mathit{label} = l$}
        \STATE sort $\mathit{Group}$ by $\mathit{bisimFamily}$,
               while marking the first occurrence of each family
        \FORALL{$(\mathit{bisimFamily}, \mathit{origId}, \mathit{parents}) \in \mathit{Group}$, in order}
            \IF{$\mathit{bisimFamily}$ is marked}
                \STATE $\mathit{lastBisimId} \gets \mathit{lastBisimId} + 1$
            \ENDIF
            \STATE append $(\mathit{origId}, \mathit{lastBisimId})$ to $B$
            \FORALL{$\mathit{parentId} \in \mathit{parents}$}
                \STATE insert $(\mathit{parentId}, \mathit{lastBisimId})$ in $Q$
            \ENDFOR
        \ENDFOR
        \STATE erase $\mathit{Group}$
    \ENDIF
\ENDFOR
\SPACE
\RETURN $B$
\end{algorithmic}
\end{algorithm*}

Some additional implementation details on the last lines of Algorithm
\ref{alg:sortbyrank} are in order. We can copy $\mathit{Ranks}$ to $N'$ while
assigning new node identifiers, going through $\mathit{Ranks}$ in order. During
this process we construct a list $R$ of (old node identifier, new node
identifier)-pairs.  To obtain a list $E'$ with updated child node identifiers,
we scan $E$ and $R$ in parallel from beginning to end, copying the entries of
$E$ to $E'$ while replacing the child node identifiers by the new identifiers as
read from $R$.  To update the parent node identifiers in $E'$ we sort $E'$ on
parent node identifier; we then scan $E'$ and $R$ in parallel from beginning to
end while replacing the parent node identifiers in $E'$ by the new identifiers
as read from $R$. Pseudocode is given in Algorithm~\ref{alg:updateidentifiers}.


The rank-label combinations of the nodes define a partitioning of the graph. In the second phase of the algorithm, we use time-forward processing to go through the blocks of this partitioning one by one. Each rank-label combination $c$ is processed as follows. Let $N_c$ be the set of nodes that have rank-label combination $c$. For each node of $N_c$, we extract the \bsm\ identifiers of its children from the priority queue (assuming that they have been placed there) and sort them, while removing doubles. Thus we get the \bsm\ families for all nodes of $N_c$. Then we sort the nodes of $N_c$ by \bsm\ family. Finally we go through the nodes of $N_c$ in order, assigning a unique \bsm\ identifier $\mathit{bisimId}(f)$ to each maximal group of nodes $N_f$ within $N_c$ that have the same \bsm\ family~$f$, and putting a message $\mathit{bisimId}(f)$ in the priority queue for all parents of the nodes of $N_f$.
Pseudocode for the second phase of the algorithm is given in Algorithm~\ref{alg:exbisim}.



\begin{theorem}\label{thm:dagfps_exp}
Given a labeled directed acyclic graph $G = \langle N, E, l\rangle$ with its nodes numbered in (reverse) topological order, we can compute the bisimilarity equivalence classes of $G$ in $O(\Sort(|N| + |E|))$ IOs.
\end{theorem}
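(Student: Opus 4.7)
The plan is to verify that both phases described above (Algorithms~\ref{alg:sortbyrank}--\ref{alg:exbisim}) fit entirely within the $\Sort$-bound budget, and then simply add up the costs. The tools I intend to use are: (i) the amortized $O(\Sort(k))$ bound for $k$ operations on the external priority queue, (ii) the standard $O(\Sort(n))$ and $O(\Scan(n))$ bounds for external sorting and scanning, and (iii) the superadditivity of $\Sort$, so that if disjoint chunks of total size $T$ are sorted separately, the aggregate cost is still $O(\Sort(T))$.

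For Phase~1, I would first observe that the time-forward processing loop of Algorithm~\ref{alg:sortbyrank} performs exactly one priority-queue insertion and one priority-queue extraction per edge (to deliver each child's rank to its parent), contributing $O(\Sort(|E|))$ IOs, while the outer loop itself just scans $N$ and $E$. Sorting $\mathit{Ranks}$ by $(\mathit{rank},\mathit{label})$ then costs $O(\Sort(|N|))$. For the identifier-update subroutine (Algorithm~\ref{alg:updateidentifiers}), I would simply observe that it consists of a constant number of scans of $N$, $R$, $E$ and $E'$, together with sorts of $R$ and $E'$ on inputs of size $O(|N|+|E|)$, all of which fit within $O(\Sort(|N|+|E|))$.

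For Phase~2, a similar argument handles the time-forward processing of \bsm\ identifiers: each edge causes one insertion and one extraction in $Q$, so time-forward processing contributes $O(\Sort(|E|))$, and the loop scans $N'$ and $E'$ once. The remaining work per rank-label block is (a) sorting the individual $\mathit{bisimFamily}$ lists to deduplicate them, and (b) sorting each $\mathit{Group}$ by $\mathit{bisimFamily}$ before emitting identifiers.

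The main obstacle I anticipate is arguing that these per-block sorts do not blow the bound. The algorithm sorts disjoint rank-label blocks independently, and within each block the $\mathit{bisimFamily}$ lists are also disjoint; the total volume of data ever fed into such a sort across the whole execution is $O(|N|+|E|)$, since each node contributes itself once and each edge contributes one entry to one family. By superadditivity of $\Sort$, the sum of the individual sort costs is therefore at most $O(\Sort(|N|+|E|))$. Once this aggregation is pinned down, adding the two phases gives the stated $O(\Sort(|N|+|E|))$ bound.
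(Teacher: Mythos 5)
Your accounting of the IO costs follows essentially the same route as the paper's own proof: one priority-queue insertion and one extraction per edge giving $O(\Sort(|E|))$, a constant number of scans and sorts of files of size $O(|N|+|E|)$ in phase one, and an aggregation argument showing that the per-block sorts in phase two touch a total data volume of $O(|N|+|E|)$. However, there are two genuine gaps. First, the theorem claims that the algorithm \emph{computes the bisimilarity equivalence classes}, and your proposal never argues correctness. The paper does: since bisimilar nodes necessarily have equal rank and equal label, any two bisimilar nodes are handled in the same execution of lines 14--21 of Algorithm~\ref{alg:exbisim}, and an induction on rank (assuming nodes of rank $r-1$ received equal identifiers if and only if they are bisimilar) shows the same for nodes of rank $r$. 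Without some such argument the IO bound is a statement about an algorithm whose output is unverified.

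Second, your step ``sorting each $\mathit{Group}$ by $\mathit{bisimFamily}$'' cannot be charged to the standard sorting bound as stated. The records in $\mathit{Group}$ have \emph{variable length} (each node's record carries one entry per child and one per parent), and the sort key $\mathit{bisimFamily}$ is itself a variable-length sequence of identifiers. The $\Theta(\Sort(n))$ bound applies to $n$ fixed-size records with unit-cost comparisons; sorting variable-length records of total size $K$ in $O(\Sort(K))$ IOs requires an external-memory string-sorting algorithm, which is precisely why the paper invokes the result of Arge et al.\ at this point. Your superadditivity argument for summing the costs of disjoint blocks is sound, but it needs that string-sorting primitive underneath it before the per-block charge of $\Sort(\text{block size})$ is legitimate.
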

\begin{proof}
We use Algorithm~\ref{alg:sortbyrank}, followed by Algorithm~\ref{alg:exbisim}.
As observed in Section~\ref{sec:prelim}, bisimilar nodes must have the same rank and the same label.
As a result, any nodes that are bisimilar to each other are processed in the same execution of lines 14--21 of Algorithm~\ref{alg:exbisim}. Based on the induction hypothesis that nodes of rank $r-1$ get the same \bsm\ identifier if and only if they are bisimilar to each other, it is now easy to show that in lines 14--21, nodes of rank $r$ get the same \bsm\ identifier if and only if they are bisimilar to each other.

As for the efficiency of the algorithm, the first phase scans and sorts files of at most $|N| + |E|$ records, for a total of $O(\Sort(|N| + |E|))$ IOs. One record is inserted into and extracted from the priority queue for each child-parent relation; thus the total number of IOs required by the priority queue is $O(\Sort(|E|))$.

The second phase is slightly more involved, as it sorts the lists $\mathit{bisimFamily}$ and the files $\mathit{Group}$. For each node, the list $\mathit{bisimFamily}$ contains one entry for each edge originating from that node; thus the total size of the lists $\mathit{bisimFamily}$ is $O(|E|)$ and they are sorted in $O(\Sort(|E|))$ IOs in total. For each node $n$, one record is added to the file $\mathit{Group}$: this records contains an identifier for $n$ and each of its children and parents. Thus the amount of data inserted into $\mathit{Group}$ over the course of the entire algorithm is $O(|N| + |E|)$. On line 14, the variable-size records in $\mathit{Group}$ can be sorted and marked with the string sorting algorithm by Arge et al.~\cite{AFGV} in $O(\Sort(|N| + |E|))$ IOs.
Thus, the complete algorithm takes $O(\Sort(|N| + |E|))$ IOs.
\end{proof}

\subsection{Enhanced algorithm}\label{subsec:enhancedalgo}

To reduce the amount of sorting needed in the second phase of the algorithm, we
propose the following enhanced algorithm. In the first phase, we not only
compute a rank for each node, but also a {\em hash} value, which is computed from the
node's label and from the hash values of its children. Thus, the first phase of
the algorithm is as in Algorithm~\ref{alg:sortbyhash}.

\begin{algorithm}[htb!]
\caption{Phase 1 (enhanced with hash values)}\label{alg:sortbyhash}
\begin{algorithmic}[1]
\REQUIRE file of nodes $N$ as records $(\mathit{id}, \mathit{label})$, sorted by $\mathit{id}$;\\
         file of edges $E$ as records $(\mathit{parent}, \mathit{child})$, sorted by $\mathit{child}$;\\
         $(\id(n),\id(m)) \in E$ implies $\id(m) < \id(n)$.
\ENSURE  nodes $N'$ as records $(\mathit{id}, \mathit{origId}, \mathit{rank}, \mathit{label}, \mathit{hash})$,
         sorted by $\mathit{id}$ and, simultaneously, by $(\mathit{rank}, \mathit{label}, \mathit{hash})$;\\
         file of edges $E'$ as records $(\mathit{parent}, \mathit{child})$, sorted by $\mathit{child}$;\\
         $\rank(n) > \rank(m)$ implies $\id(m) < \id(n)$.
\SPACE
\STATE create empty file $\mathit{Ranks}$ of records $(\mathit{id}, \mathit{rank}, \mathit{label}, \mathit{hash})$
\STATE create empty priority queue $Q$ of records\\ $(\mathit{id}, \mathit{childsRank}, \mathit{childsHash})$, ordered by $\mathit{id}$
\SPACE
\FORALL{$(n, \mathit{label}) \in N$, in order}
    \STATE $\mathit{rank} \gets 0$
    \STATE initialize empty list $\mathit{childrensHashes}$
    \WHILE{record at head of $Q$ has $\mathit{id} = n$}
        \STATE extract $(n, \mathit{childsRank}, \mathit{childsHash})$ from $Q$
        \STATE $\mathit{rank} \gets \max(\mathit{rank}, \mathit{childsRank}+1)$
        \STATE add $\mathit{childsHash}$ to $\mathit{childrensHashes}$
    \ENDWHILE
    \STATE sort $\mathit{childrensHashes}$, removing doubles
    \STATE $\mathit{hash} \gets$ hash value from $\mathit{label}$ and $\mathit{childrensHashes}$
    \STATE append $(n, \mathit{rank}, \mathit{label}, \mathit{hash})$ to $\mathit{Ranks}$
    \WHILE{next edge from $E$ has $\mathit{child} = n$}
        \STATE read $(\mathit{parent}, n)$ from $E$
        \STATE insert $(\mathit{parent}, \mathit{rank})$ in $Q$
    \ENDWHILE
\ENDFOR
\SPACE
\STATE sort $\mathit{Ranks}$ lexicographically by $\mathit{rank}, \mathit{label}, \mathit{hash}$
\STATE copy $\mathit{Ranks}$ to $N'$ while assigning new node identifiers
\STATE copy $E$ to $E'$ while updating node identifiers in $E'$
\STATE sort $E'$ by $\mathit{child}$
\RETURN $(N', E')$
\end{algorithmic}
\end{algorithm}

The second phase of the algorithm is exactly as before, except that $\mathit{rank}, \mathit{label}$ is replaced by $\mathit{rank}, \mathit{label}, \mathit{hash}$; in particular, lines 14--21 are executed each time $N'$ has no more records with the same rank, label, \emph{and hash value} as the records seen so far.

By induction on increasing rank one can prove that bisimilar nodes get the same hash value, and therefore, any pair of nodes that are bisimilar to each other will still be processed in the same execution of lines 14--21 in Algorithm~\ref{alg:exbisim}. Thus, the algorithm is still correct. Note that if the hash value from a label and any given set of $k$ hash values can be computed in $O(\Sort(k))$ IOs, the complete algorithm also still runs in $O(\Sort(|N|+|E|))$ IOs in the worst case.

In many practical settings the priority queues in the first phase may be small and fit in main memory. For example, if the input graph is a tree in reverse depth-first order, then at any time during phase one, the queues will only contain messages to/from the nodes on a single path between the root and a leaf. As long as the children of a single node always fit in memory, the hash values can be computed in memory as well. Thus, the cost of computing the hash values in the first phase is small, and in practice, each hash value will be read or written at most eight times when writing, sorting, and reading $\mathit{Ranks}$ and $N'$. In return, the grouping by rank, label, \emph{and} hash value induces a much finer partitioning of $G$ than the grouping by rank and label only. As a result, the sorting on line 14 of the second phase will be less likely to require the use of external memory. Note that each node's record in the $\mathit{Group}$ file contains one number for the node's original identifier, plus one number for each neighbour of the node (a bisimilarity identifier for every child, and a node identifier for every parent). Therefore, even if on average, nodes have only two neighbours, a record in the group file has an average size of three numbers. Since sorting out-of-memory would take at least two read passes and two write passes, this would amount to the transfer of $3 \cdot 4 = 12$ numbers per node to or from disk. Thus, even in this setting with few edges, the optimization with hashing may already lead to IO-savings in the second phase of twelve numbers per node.

\subsection{Implementation using STXXL}\label{subsec:stxxl}

We have implemented the enhanced bisimulation partitioning algorithm of Section
\ref{subsec:enhancedalgo} using the building blocks available in the STXXL
library, a mature open-source C++ library which provides basic external memory
data structures and algorithms \cite{stxxl}.\footnote{For more information we
refer to the STXXL project page at {\tt http://stxxl.sourceforge.net/}.} Since
STXXL does not include algorithms to sort sets of variable-length records in
external memory, we used the following adaptation of Algorithm~\ref{alg:exbisim}.

\ignore{
Because in practice, the enhanced algorithm ensures
that the sorting on line 14 of Algorithm~\ref{alg:exbisim} will rarely run out
of memory, we did not (yet) actually implement the necessary string sorting
algorithms; instead we used the following adaptation of Algorithm~\ref{alg:exbisim}.
}


Instead of storing, for each node $n$, a record of the form $(\mathit{bisimFamily}, \mathit{origId}, \mathit{parents})$ in the file $\mathit{Group}$, we store the following fixed-size records:
(i) one record of the type $(\mathit{secondHash},$ $\mathit{origId})$, where $\mathit{secondHash}$ is a secondary hash value computed from the bisimilarity family of $n$;
(ii) for each child of $n$, a record of the type $(\mathit{secondHash},$ $\mathit{origId},$ $\mathit{childsBisimId})$ (these records collectively store the bisimilarity family of $n$);
(iii) for each parent of $n$, a record of the type $(\mathit{secondHash},$ $\mathit{origId},$ $\mathit{parentId})$ (these records collectively store the parents of $n$).
The secondary hash values computed from the bisimilarity families are such that bisimilarity families of different size always have different secondary hash values.

In line 14, we sort the above mentioned records lexicographically, thus
obtaining a list of nodes with their bisimilarity families and parents, ordered
by secondary hash value. Although unlikely, collisions may occur: nodes with
the same secondary hash value (which appear consecutively in the sorted list)
may have different bisimilarity families. Therefore we have to be a bit more
careful when assigning bisimilarity identifiers in line 16--20: when processing
nodes that have the same secondary hash value, we record their bisimilarity
families with their bisimilarity identifiers in a dictionary; before assigning a
new identifier to a particular bisimilarity family, we first check the
dictionary to see if an identifier for this bisimilarity family had already been
assigned. Considering that in practice (and, under the assumption of perfect
hashing, also in theory) the dictionary is unlikely to ever be large,
we used a simple sequential file implementation for this dictionary.
Bisimilarity families are only stored in the dictionary as long as nodes with
the same secondary hash value are being processed; the dictionary is always
erased before proceeding to nodes with another rank, label, hash value, or
secondary hash value.

\ignore{
\HH{I wrote a small paragraph, commented out in the file, with the analysis that the algorithm still has a good expected number of IOs---but I think I prefer not to mention that. I am reluctant to bring up the word ``expected'', I do not want the referees to start to have doubts about the other (worst-case) bounds. And the result is not so interesting anyway, since we know that we could do better in theory \emph{and} practice by implementing string-sorting.}

\GF{I completely agree.}

Under the assumption that the hash values computed from the bisimilarity families are such that the expected maximum number of families that have the same hash value is bounded by a constant, the IO-cost of any query in the dictionary (performed by simply scanning the complete dictionary) is always proportional to the size of the query family divided by $B$. Since the total size of all bisimilarity families encountered during a complete run of the algorithm is $|E|$, the queries in the dictionary would take an expected $O(\Scan(|E|))$ IOs in total. Thus, the expected number of IOs needed by the complete bisimulation partitioning algorithm is still $O(\Sort(|N|+|E|))$.
}

\section{Indexing XML documents}
\label{sec:xml}

\begin{figure*}
    \centering
    \subfloat[XML Document tree]{
        \begin{tikzpicture}[graph]
            \node (n0) at (1.75,3) {a$_{(0)}$};
            \node (n1) at (0.5,2) {a$_{(1)}$};
            \node (n2) at (3,2) {a$_{(2)}$};

            \node (n3) at (0,1) {b$_{(3)}$};
            \node (n4) at (1,1) {c$_{(4)}$};
            \node (n5) at (2,1) {b$_{(5)}$};
            \node (n6) at (3,1) {c$_{(6)}$};
            \node (n7) at (4,1) {a$_{(7)}$};

            \node (n8) at (3.5,0) {b$_{(8)}$};
            \node (n9) at (4.5,0) {c$_{(9)}$};

            \path[->] (n0) edge (n1)
                           edge (n2)
                      (n1) edge (n3)
                           edge (n4)
                      (n2) edge (n5)
                           edge (n6)
                           edge (n7)
                      (n7) edge (n8)
                           edge (n9);
        \end{tikzpicture}
    }\qquad
    \subfloat[1-index]{
        \begin{tikzpicture}[graph]
            \node (n0) at (1.25,3) {a$_{(0)}$};
            \node (n1) at (1.25,2) {a$_{(1,2)}$};

            \node (n3) at (0,1) {b$_{(3,5)}$};
            \node (n4) at (1.25,1) {c$_{(4,6)}$};
            \node (n7) at (2.5,1) {a$_{(7)}$};

            \node (n8) at (1.75,0) {b$_{(8)}$};
            \node (n9) at (3.25,0) {c$_{(9)}$};

            \path[->] (n0) edge (n1)
                      (n1) edge (n3)
                           edge (n4)
                           edge (n7)
                      (n7) edge (n8)
                           edge (n9);
        \end{tikzpicture}
    }\qquad
    \subfloat[F\&B-index]{
        \begin{tikzpicture}[graph]
            \node (n0) at (1.25,3) {a$_{(0)}$};
            \node (n1) at (0.75,2) {a$_{(1)}$};
            \node (n2) at (1.75,2) {a$_{(2)}$};

            \node (n3) at (0,1) {b$_{(3,5)}$};
            \node (n4) at (1.25,1) {c$_{(4,6)}$};
            \node (n7) at (2.5,1) {a$_{(7)}$};

            \node (n8) at (1.75,0) {b$_{(8)}$};
            \node (n9) at (3.25,0) {c$_{(9)}$};

            \path[->] (n0) edge (n1)
                           edge (n2)
                      (n1) edge (n3)
                           edge (n4)
                      (n2) edge (n3)
                           edge (n4)
                           edge (n7)
                      (n7) edge (n8)
                           edge (n9);
        \end{tikzpicture}
    }\\
    \subfloat[$A(0)$-index]{
        \begin{tikzpicture}[graph]
            \node (n1) at (1,1) {a$_{(0,1,2,7)}$};

            \node (n3) at (0,0) {b$_{(3,5,8)}$};
            \node (n4) at (2,0) {c$_{(4,6,9)}$};

            \path[->] (n1) edge (n3)
                           edge (n4)
                           edge[loop right] (n1);
            \end{tikzpicture}
    }\qquad
    \subfloat[$A(1)$-index]{
        \begin{tikzpicture}[graph]
            \node (n0) at (1,3) {a$_{(0)}$};
            \node (n1) at (1,2) {a$_{(1,2,7)}$};

            \node (n3) at (0,1) {b$_{(3,5,8)}$};
            \node (n4) at (2,1) {c$_{(4,6,9)}$};

            \path[->] (n0) edge (n1)
                      (n1) edge (n3)
                           edge (n4)
                           edge[loop right] (n1);
        \end{tikzpicture}
    }\qquad
    \subfloat[$A(2)$-index]{
        \begin{tikzpicture}[graph]
            \node (n0) at (0.25,3) {a$_{(0)}$};
            \node (n1) at (0.25,2) {a$_{(1,2)}$};

            \node (n3) at (0,1) {b$_{(3,5,8)}$};
            \node (n4) at (2,1) {c$_{(4,6,9)}$};
            \node (n7) at (1.75,2) {a$_{(7)}$};

            \path[->] (n0) edge (n1)
                      (n1) edge (n3)
                           edge (n4)
                           edge (n7)
                      (n7) edge (n3)
                           edge (n4);
        \end{tikzpicture}
    }\qquad
    \subfloat[$A(3)$-index]{
        \begin{tikzpicture}[graph]
            \node (n0) at (1.25,3) {a$_{(0)}$};
            \node (n1) at (1.25,2) {a$_{(1,2)}$};

            \node (n3) at (0,1) {b$_{(3,5)}$};
            \node (n4) at (1.25,1) {c$_{(4,6)}$};
            \node (n7) at (2.5,1) {a$_{(7)}$};

            \node (n8) at (1.75,0) {b$_{(8)}$};
            \node (n9) at (3.25,0) {c$_{(9)}$};

            \path[->] (n0) edge (n1)
                      (n1) edge (n3)
                           edge (n4)
                           edge (n7)
                      (n7) edge (n8)
                           edge (n9);
        \end{tikzpicture}
    }
    \caption{An XML document tree and five different index types; namely the 1-index, the F\&B-index, and the $A(k)$-index (for $0 \leq k \leq 3$). We have annotated each node in the XML document tree with a unique identifier. This identifier is used in the indices to indicate the nodes represented by each index node.}\label{fig:xml_indices}
\end{figure*}
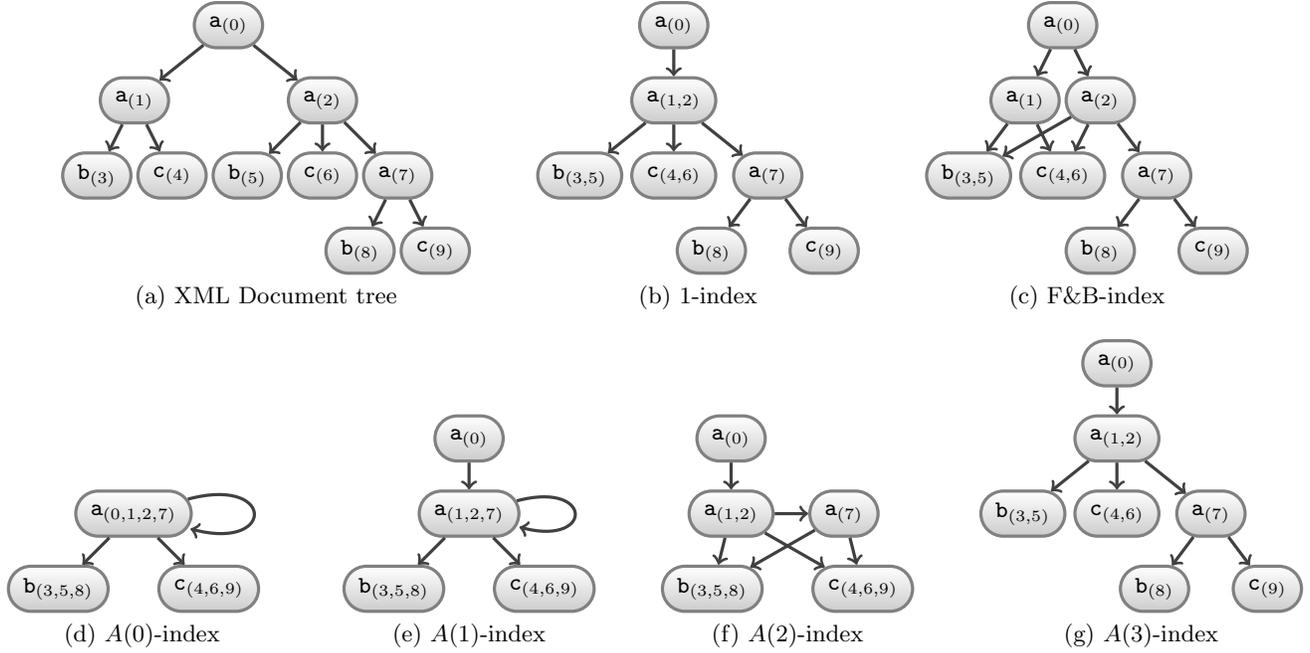

XML documents are widely used to exchange and store tree-structured data
\cite{fbbook}.  In this section we investigate specializations of the general algorithm
from the previous section to efficiently calculate in external memory variations
of bisimulation which have been proposed in the design of indexing data
structures for XML and semi-structured databases.  We shall discuss two
well-known variations, namely the 1-index \cite{indexbb} and the $A(k)$-index
\cite{akindex}.
We also briefly discuss how our approach can be specialized to efficiently
compute the well known F\&B-index \cite{fbbook}.


In Figure \ref{fig:xml_indices} we have given an example of a simple XML
document tree and indices built from this tree.  In the
index figures, nodes represent partition blocks, and there exists an edge
from block $A$ to block $B$ if (and only if) there exists an edge in the
original document from a node in $A$ to a node in $B$.

\subsection{The 1-index}\label{sec:xmlbbisim}

The 1-index utilizes ``backward'' bisimulation to relate nodes with the same
structure with respect to path-queries \cite{indexbb}. Backward bisimulation is equivalent to
normal bisimulation on a graph wherein all edges are reversed in direction.
Figure \ref{fig:xml_indices}(b) illustrates the 1-index of our example XML
document.

Backward bisimulation combined with the nested tree-struc\-ture  of XML documents
gives us several properties that we can utilize to optimize bisimulation
partitioning.
Recall that the basic algorithm from Section~\ref{subsec:basicalgo} consists of
two phases: in the first phase nodes are sorted by rank and label; in the second
phase nodes of the same rank and label are sorted by bisimilarity family.
Alternatively we could take the following approach: in the first phase we sort
by rank only; in the second phase we sort nodes of the same rank by label and
bisimilarity family. Obviously, this would not affect the correctness and the
asymptotic I/O-complexity of the algorithm. However, in the case of 1-indexes
for XML-documents it brings the following advantage: we can avoid the use of a
priority queue in the first phase, and we can avoid use of several sorting
passes to assign identifiers to nodes. We achieve this as follows.

Instead of assigning identifiers to nodes by first sorting the nodes by rank and
then using the positions in the sorted list as identifiers, we can use composite
node identifiers of the form $(\mathit{rank}, \mathit{idOnLevel})$, where $\mathit{rank}$ is
the backward rank of the node (that is, the node's depth in the tree), and
$\mathit{idOnLevel}$ is a unique identifier with respect to all nodes with
backward rank $\mathit{rank}$.  We can now use the structure of XML documents to
compute these identifiers efficiently---in particular we will exploit the fact
that an XML document essentially stores a so-called \emph{Euler
tour}~\cite{meyer} of the tree, in order.

Our algorithm will traverse the tree while maintaining a counter $\mathit{depth}$ that holds the depth of the current position in the tree, and an array $\mathit{count}$, in which the $i$-th number (denoted $\mathit{count}[i]$) holds the number of nodes at depth $i$ encountered so far. Initially, $\mathit{depth} = 0$ and the array $\mathit{count}$ is empty; whenever we try to read an element of $\mathit{count}$ that does not exist yet, this element will be created and initialized to zero.

\begin{algorithm}[htb!]
\caption{XML 1-index, phase 1: sort by depth}\label{alg:brank}
\begin{algorithmic}[1]
\REQUIRE XML document $D$.
\ENSURE  file $N'$ of XML nodes as records\\ $(\mathit{rank},
\mathit{idOnLevel}, \mathit{origId}, \mathit{label})$,\\
          sorted by $(\mathit{rank}, \mathit{idOnLevel})$;\\
          file $E'$ of edges as records\\ $(\mathit{parentRank},
\mathit{parentIdOnLevel}, \mathit{childIdOnLevel})$,\\
          sorted by $(\mathit{parentRank}, \mathit{parentIdOnLevel})$;\\
\SPACE
\STATE create empty file $N'$
\STATE create empty file $E'$
\STATE create empty array of counters $\mathit{count}$
\STATE $\mathit{depth} \gets 0$
\SPACE
\FORALL{tags $\mathit{tag}$ of $D$, in order}
     \IF{$\mathit{tag}$ is a start tag}
         \IF{$\mathit{count}[\mathit{depth}]$ does not exist}
             \STATE add an entry $\mathit{count}[\mathit{depth}] = 0$ to
$\mathit{count}$
         \ENDIF
         \IF{$\mathit{depth} \neq 0$}
             \STATE append\\
             \quad $(\mathit{depth}-1,
\mathit{count}[\mathit{depth}-1]-1, \mathit{count}[\mathit{depth}])$\\
             \quad to $E'$
         \ENDIF
         \STATE determine node identifier $\mathit{origId}$ and label
$\mathit{label}$
         \STATE append $(\mathit{depth}, \mathit{count}[\mathit{depth}],
\mathit{origId}, \mathit{label})$ to $N'$
         \STATE increment $\mathit{count}[\mathit{depth}]$
         \STATE increment $\mathit{depth}$
     \ELSIF{$\mathit{tag}$ is an end tag}
         \STATE decrement $\mathit{depth}$
     \ENDIF
\ENDFOR
\SPACE
\STATE sort $N'$ by $(\mathit{rank}, \mathit{idOnLevel})$
\STATE sort $E'$ by $(\mathit{parentRank}, \mathit{parentIdOnLevel})$
\SPACE
\RETURN $(N', E')$
\end{algorithmic}
\end{algorithm}

Now, when we read a start-tag representing a node $n$ during the processing of an XML document, this node is assigned backward rank $\mathit{rank} = \mathit{depth}$ and $\mathit{idOnLevel} = \mathit{count}[\mathit{rank}]$; 
we increment both $\mathit{count}[\mathit{rank}]$ and $\mathit{depth}$ by one, and 
we construct an edge to $n$ from its parent: this must be the last node encountered on the previous level, with composite identifier $(\mathit{rank}-1, \mathit{count}[\mathit{rank}-1]-1)$. 
When we read an end-tag we simply decrement $\mathit{depth}$ by one.
After reading the complete tree, we simply sort the nodes by composite identifier, and the edges by the composite identifiers of the parents. Pseudocode for the complete first phase of the algorithm is given in Algorithm~\ref{alg:brank}.

The second phase of the algorithm is now simple to implement. Note that we are computing \emph{backward} bisimilarity equivalence classes, and therefore parents and children have switched roles. Thus, the bisimilarity family of a node is simply the bisimilarity identifier of the parent of a node, and no implementations of string sorting or secondary hash functions and dictionaries (as in Section~\ref{subsec:stxxl}) are needed. Pseudocode is given in Algorithm~\ref{alg:xmlbbisim}.


\begin{theorem}\label{thm:1index}
Given an XML-document of $N$ nodes, we can compute its 1-index in $O(\Sort(|N|))$ IOs.
\end{theorem}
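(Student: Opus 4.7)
The plan is to prove correctness and then the IO bound in two steps, one per phase, leaning on the general framework already established in Theorem~\ref{thm:dagfps_exp}.

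First I would argue correctness of Phase 1 (Algorithm~\ref{alg:brank}). Since an XML document is essentially an Euler tour of its tree, a single streaming pass over the tags suffices: at each start tag the algorithm knows the current depth and the running count of nodes previously seen at that depth, and it pairs these to form a unique composite identifier $(\mathit{depth}, \mathit{count}[\mathit{depth}])$. A tree has a unique parent per node, which is precisely the most recently opened node at depth $\mathit{depth}-1$, and that node's composite identifier is $(\mathit{depth}-1, \mathit{count}[\mathit{depth}-1]-1)$, so the parent edge is emitted correctly. The depth of a node equals its backward rank in a tree, so after sorting $N'$ by $(\mathit{rank},\mathit{idOnLevel})$ and $E'$ by $(\mathit{parentRank},\mathit{parentIdOnLevel})$, the files satisfy the invariants needed for Phase 2: nodes with smaller backward rank precede nodes with larger backward rank (so parents precede children in the reversed-edge view that governs backward bisimulation), and the edge file is sorted by the backward-child (i.e.\ the parent in the original tree).

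Next I would handle Phase 2 (Algorithm~\ref{alg:xmlbbisim}) by viewing it as a specialization of Algorithm~\ref{alg:exbisim} applied to the edge-reversed tree. Because each XML node has exactly one parent, the backward \bsm\ family of a node consists of a single \bsm\ identifier, namely that of its parent. This collapses the variable-length \bsm\ family records used in the general algorithm to fixed-size records, eliminating the need for string sorting, secondary hashes, or the dictionary described in Section~\ref{subsec:stxxl}. The induction from the proof of Theorem~\ref{thm:dagfps_exp}, read on the reversed graph, then directly yields that two nodes receive the same \bsm\ identifier iff they are backward-bisimilar.

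For the IO bound, Phase 1 performs one sequential scan of the document, writes $N'$ and $E'$ once each in streaming order, and finishes with two sorts, each on $O(|N|)$ records; this is $O(\Scan(|N|)) + O(\Sort(|N|)) = O(\Sort(|N|))$. In Phase 2, each node contributes a single fixed-size record to $\mathit{Group}$ and inserts one message into the priority queue per outgoing reversed edge, so the total priority-queue and group-sorting cost is $O(\Sort(|N|))$ by the standard external priority-queue bound~\cite{Arge}. Adding the two phases gives $O(\Sort(|N|))$. The main subtlety to address is the auxiliary array $\mathit{count}$ in Phase 1: its size is bounded by the tree depth, and its access pattern only touches positions $\mathit{depth}$ and $\mathit{depth}-1$ at each step, so it is accessed in a near-sequential fashion along the Euler tour and contributes only $O(\Scan(|N|))$ IOs in total (indeed, for any realistic XML document the depth is tiny and $\mathit{count}$ fits in internal memory outright), which does not disturb the overall $O(\Sort(|N|))$ bound.
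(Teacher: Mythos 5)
Your proposal is correct and follows essentially the same route as the paper: correctness is inherited from the general DAG argument (read on the edge-reversed tree, where the one-parent property makes the \bsm\ family a single identifier), and the IO bound combines the near-sequential access pattern of the $\mathit{count}$ array (one position forward or back per tag, hence $O(\Scan(|N|))$ IOs with two buffered blocks) with $|E| = |N|-1$ to reduce $O(\Sort(|N|+|E|))$ to $O(\Sort(|N|))$. The only slight imprecision is your claim that the $\mathit{Group}$ records become fixed-size --- they still carry a variable-length children list, but since the sort key $(\mathit{label}, \mathit{parentBisimId})$ is fixed-size and the total data volume is $O(|N|)$ for a tree, this does not affect the bound.
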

\begin{proof}
We use Algorithm~\ref{alg:brank}, followed by Algorithm~\ref{alg:xmlbbisim}. The correctness of the algorithm follows from the same arguments as for Algorithm~\ref{alg:sortbyrank} and Algorithm~\ref{alg:exbisim} in Theorem~\ref{thm:dagfps_exp}.

As for the IO-complexity, observe that the accesses to the file $\mathit{count}$
follow a very well-structured pattern: effectively we move ahead in the file by
one step whenever we encounter a start tag, and we move back in the file by one
step whenever we encounter an end tag. Thus, if we keep the two most recently
accessed blocks of the file in memory, at least $B$ tags must be read between
successive IOs on the $\mathit{count}$ file. Otherwise, the algorithm runs in
$O(\Sort(|N| + |E|)$ IOs by the same arguments as for
Theorem~\ref{thm:dagfps_exp}; since $|E| = |N| - 1$, this simplifies to
$O(\Sort(|N|))$ IOs.
\end{proof}

\begin{algorithm*}
\caption{XML 1-index, phase 2: sort by backward \bsm\ equivalence class}\label{alg:xmlbbisim}
\begin{algorithmic}[1]
\REQUIRE file of XML nodes $N'$ as records $(\mathit{rank},
\mathit{idOnLevel}, \mathit{origId}, \mathit{label})$,
          sorted lexicographically by $(\mathit{rank},
\mathit{idOnLevel})$;\\
          file of edges $E'$ as records $(\mathit{parentRank},
\mathit{parentIdOnLevel}, \mathit{childIdOnLevel})$,\\sorted
lexicographically by $(\mathit{parentRank}, \mathit{parentIdOnLevel})$;\\
\ENSURE  file of XML nodes $B$ as records $(\mathit{origId},
\mathit{bisimId})$
\SPACE
\STATE create empty file $B$ of records $(\mathit{origId}, \mathit{bisimId})$
\STATE create priority queue $Q$ of records $(\mathit{rank},
\mathit{idOnLevel}, \mathit{parentBisimId})$, ordered by
$(\mathit{rank}, \mathit{idOnLevel})$
\STATE insert $(-1, 0, 0)$ in $Q$ (sentinel for root)
\STATE $\mathit{lastBisimId} \gets 0$
\STATE create empty file $\mathit{Group}$ of records $(\mathit{label},
\mathit{parentBisimId}, \mathit{origId}, \mathit{children})$
\SPACE
\FORALL{$(r, \mathit{idOnLevel}, \mathit{origId}, \mathit{label}) \in
N'$, in order}
     \STATE extract $(r, \mathit{idOnLevel}, \mathit{parentBisimId})$
from $Q$
     \STATE create an empty list $\mathit{children}$
     \WHILE{next record of $E'$ has $\mathit{parentRank} = r$ and
$\mathit{parentIdOnLevel} = \mathit{idOnLevel}$}
         \STATE read $(\mathit{parentRank}, \mathit{parentIdOnLevel},
\mathit{childIdOnLevel})$ from $E$
         \STATE append $\mathit{childIdOnLevel}$ to $\mathit{children}$
     \ENDWHILE
     \STATE append $(\mathit{label}, \mathit{parentBisimId},
\mathit{origId}, \mathit{children})$ to $\mathit{Group}$
     \SPACE
     \IF{$N'$ has no more records with $\mathit{rank} = r$}
         \STATE sort $\mathit{Group}$ lexicographically by
$(\mathit{label}, \mathit{parentBisimId})$
         \FORALL{$(\mathit{label}, \mathit{parentBisimId},
\mathit{origId}, \mathit{children}) \in \mathit{Group}$, in order}
             \IF{$\mathit{label}$ and $\mathit{parentBisimId}$ are not
the same as in previous record of $\mathit{Group}$}
                 \STATE $\mathit{lastBisimId} \gets \mathit{lastBisimId}
+ 1$
             \ENDIF
             \STATE append $(\mathit{origId}, \mathit{lastBisimId})$ to $B$
             \FORALL{$\mathit{childIdOnLevel} \in \mathit{children}$}
                 \STATE insert $(r + 1, \mathit{childIdOnLevel},
\mathit{lastBisimId})$ in $Q$
             \ENDFOR
         \ENDFOR
         \STATE erase $\mathit{Group}$
     \ENDIF
\ENDFOR
\SPACE
\RETURN $B$
\end{algorithmic}
\end{algorithm*}

\subsubsection{The F\&B-index}
 
 The 1-index summarizes the structure of graphs by only looking in one
 direction, from parent to child.  The F\&B-index groups nodes based on a
 summary of their structure with respect to both ancestors and descendants
 \cite{fbbook,fbkaushik}.  
Figure \ref{fig:xml_indices}(c) illustrates the F\&B-index of our example XML
document.
 
 For trees, Grimsmo et al.\ have shown that the
 F\&B-index partitioning can be obtained by first computing forward bisimulation
 and then refining the obtained partition by computing backward bisimulation,
 i.e., by applying the algorithm from Section~\ref{sec:partitioning} twice (once
 with edges reversed) \cite{linear}.   It is possible to significantly reduce
 the cost of this computation,  by a straightforward adaptation of the algorithm from
 Section~\ref{sec:xmlbbisim} for the backwards bisimulation
step \cite{thesis}.

\subsection{The \boldmath{$A(k)$}-index}

The $A(k)$-index utilizes backward node $k$-bisimulation,
a localized variant of backward node bisimulation.
The $A(k)$-index groups nodes $n$ based on the structure of
ancestor nodes at most $k$ steps away from $n$.

\begin{definition}\label{def:kbbisim}
Let $G = \langle N, E, l \rangle$ be a graph, $m, n \in N$, and $k\geq
0$.   We say $m$ and $n$ are
backward $k$-bisimilar, denoted  $n \bisim^k m$, if and only if $k=0$ and
$l(n) = l(m)$, or $k > 0$ and:
    \begin{enumerate}
        \item the nodes $n$ and $m$ are backward $(k-1)$-bisimilar, that is, $n \bisim^{k-1} m$;
        \item for each $n' \in \parents(n)$, there is an $m' \in \parents(m)$ with
        $n' \bisim^{k-1} m'$, and
        \item for each $m' \in \parents(m)$, there is an $n' \in \parents(n)$ with
        $n' \bisim^{k-1} m'$.
    \end{enumerate}
\end{definition}
Figures \ref{fig:xml_indices}(d)-(g) illustrate the 
$A(0)$-, 
$A(1)$-, 
$A(2)$-, and
$A(3)$-index, resp., of our example XML document.

The $A(k)$-index seems similar to the 1-index. However, there is a critical difference
between the two. Whereas all backward
bisimilar nodes have the same rank, this does not necessarily
hold for backward $k$-bisimilar nodes. We thus cannot use backward rank to
localize the partitioning computations.  We can, however, express backward node
$k$-bisimilarity on trees in another way; namely, in terms of $k$-traces.

\begin{definition}
Let $G = \langle N, E, l \rangle$ be a tree,  $r \in N$ be the root of
$G$, $n\in N$, and
$L(r, n) = \langle l(r),\dots,l(n)\rangle$
be the sequence of labels of the nodes on the path from $r$ to $n$ in
$E$.
For $k\geq 0$,
the $k$-trace of $L(r, n)$, denoted $T_n^k$, is the sequence
containing the last $k$ elements in $L(r, n)$.
If $k > |L(r, n)|$,  the length of $L(r, n)$, then the $k$-trace is
constructed by prefixing $L(r, n)$ with
$k - |L(r, n)|$ occurrences of some reserved label $\lambda$ not in the range of
$l$.
\end{definition}

The $k$-traces, which are easily represented by fixed-size values, are used
for identifying backward $k$-bisimilar equivalent nodes, as follows.

\begin{proposition}\label{prop:aktracek1}

Let $G = \langle N, E, l \rangle$ be a tree,
$m,n \in N$, and $k\geq 0$.
Then $n \bisim^k m$ if and
only if $T_n^{k+1} = T_m^{k+1}$.

\end{proposition}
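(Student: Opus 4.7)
The natural approach is induction on $k$, exploiting the fact that in a tree every non-root node has exactly one parent, so the quantifiers in Definition~\ref{def:kbbisim} collapse. Let $p$ denote the (unique) parent of $n$ when it exists, and similarly $q$ for $m$.

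Before the induction I would verify two elementary properties of $k$-traces, both by a case split on whether the depth of $n$ is at least $k$. (i) \emph{Parent recurrence}: if $n$ has parent $p$, then $T_n^{k+1} = T_p^k \cdot \langle l(n) \rangle$; if $n$ is the root, then $T_n^{k+1}$ consists of $k$ copies of $\lambda$ followed by $l(n)$. (ii) \emph{Suffix identity}: the last $k$ symbols of $T_n^{k+1}$ form exactly $T_n^k$. Both hold uniformly, including in the padded regime where the root-to-$n$ path has fewer than $k+1$ nodes, because ``stepping up to the parent'' simply replaces one padding symbol at the head by a real label or, at the very top, leaves a $\lambda$ where a label would otherwise be.

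With these identities in hand, the base case $k = 0$ is immediate: $T_n^1 = \langle l(n) \rangle$, and by definition $n \bisim^0 m \Leftrightarrow l(n) = l(m)$. For the inductive step, the forward direction uses that, since $G$ is a tree, conditions 2 and 3 of Definition~\ref{def:kbbisim} collapse to ``$n$ and $m$ are both roots, or both have parents $p, q$ with $p \bisim^{k-1} q$''; in either case the IH combined with (i) and with $l(n) = l(m)$ (which is inherited by peeling $n \bisim^{k-1} m$ down to $n \bisim^0 m$) forces $T_n^{k+1} = T_m^{k+1}$. For the converse, $T_n^{k+1} = T_m^{k+1}$ immediately yields $l(n) = l(m)$ (last symbol) and, via (ii), $T_n^k = T_m^k$ and hence $n \bisim^{k-1} m$ by the IH. To recover conditions 2 and 3, I would use the observation that for $k \geq 1$ a node is the root iff its $(k+1)$-trace has exactly one non-$\lambda$ symbol; so trace equality forces $n, m$ to be roots together or not, and in the non-root case, reading off the first $k$ symbols via (i) gives $T_p^k = T_q^k$, whence $p \bisim^{k-1} q$ by the IH.

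The main obstacle is purely bookkeeping around the $\lambda$-padding: one has to make sure the parent recurrence (i) is valid at every depth, not only at depths $\geq k$, and that the ``root detection from the trace'' observation is valid in the inductive range $k \geq 1$. Once those boundary cases are dispatched, the induction carries through mechanically from the two identities, and the tree assumption is used only through the at-most-one-parent property of \parents.
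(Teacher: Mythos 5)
Your plan is sound, and all of its load-bearing claims check out: the parent recurrence $T_n^{k+1} = T_p^k \cdot \langle l(n)\rangle$ holds uniformly across the padded and unpadded regimes, the suffix identity is immediate from the definition of the trace, the quantifiers in Definition~\ref{def:kbbisim} do collapse on a tree to ``both roots, or parents $(k-1)$-bisimilar'' (note in particular that a root and a non-root can never be $k$-bisimilar for $k\geq 1$, which your case split uses implicitly and correctly), and root detection from the $(k+1)$-trace is valid for $k\geq 1$ precisely because $\lambda$ is reserved outside the range of $l$. The paper itself states Proposition~\ref{prop:aktracek1} without giving a proof (deferring to the cited thesis), so there is no textual argument to compare against; your induction on $k$ is exactly the kind of argument one would expect, and I see no gap in it.
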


While processing an XML document, we can use a stack to store the labels of all
parents of the current node $n$ by pushing the label of a node onto the stack
when we encounter a start-tag and popping the top of the stack when we encounter
an end-tag. By taking the topmost $k+1$ elements we get $T^{k+1}_n$, the $(k+1)$-trace of $n$.
This leads
to a simple $A(k)$-index construction algorithm for XML documents, a
sketch of which is presented in Algorithm \ref{alg:akbisim}. The
algorithm has a worst-case IO-complexity of $O(\Sort(k|N|))$ IOs.

\begin{algorithm}[htb!]
\caption{$A(k)$-index construction for XML documents}\label{alg:akbisim}
\begin{algorithmic}[1]
\REQUIRE XML document $D$.
\ENSURE file of XML nodes $B$ as records $(\mathit{origId}, \mathit{trace})$
\SPACE
\STATE create empty file $B$ of records $(\mathit{origId}, \mathit{trace})$
\STATE create empty stack $S$
\STATE push $k$ dummy labels $\lambda$ onto $S$
\SPACE
\FORALL{tags $\mathit{tag}$ of $D$, in order}
     \IF{$\mathit{tag}$ is a start tag}
        \STATE determine node identifier $\mathit{origId}$ and label
$\mathit{label}$
        \STATE push $\mathit{label}$ onto $S$
        \STATE append ($\mathit{origId}$, top $k+1$ elements of $S$) to $B$
     \ELSIF{$\mathit{tag}$ is an end tag}
        \STATE pop one label from $S$
     \ENDIF
\ENDFOR
\SPACE
\STATE Sort $B$ by $\mathit{trace}$
\RETURN $B$
\end{algorithmic}
\end{algorithm}

\section{Empirical analysis}\label{sec:exp}

To investigate the empirical behavior of our algorithms, we have performed four
separate experiments.
All experiments were performed on a
standard laptop with an Intel Core i5-560M processor and 4GB of main memory. We
have used the internal hard disk drive of this system for sorting and for
storing temporary data structures such as priority queues.  Further details can be
found in \cite{thesis}.\footnote{Open-source code of the full C++ implementation of the
algorithms and supporting tooling used in our analysis can be found at {\tt http://jhellings.nl/projects/exbisim/}.}

\paragraph*{Experiment 1} In this experiment we measured the performance of the
general bisimulation algorithm from Section~\ref{subsec:stxxl} as a function of
the number of nodes in the input graph.  We also considered the difference
between starting with a good initial partition (by
Algorithm~\ref{alg:sortbyhash}, based on rank, label, and hash value) and
starting with a poor initial partition (by Algorithm~\ref{alg:sortbyrank}, based
on rank and label only), in order to measure the impact of a good initial
partition on performance.

For this experiment, random graphs having between $100 \cdot 10^6$ and $1000 \cdot
10^6$ nodes were created using the generator described in Appendix
\ref{sec:generate}.  Every graph had an average of three to four edges per node.
The file size of the input graphs ranged between $2.1$GB and $21.2$GB.  The
number of bisimulation partition blocks in the output ranged from $70 \cdot
10^6$ for the smallest graph to $708 \cdot 10^6$ for the largest graph. For the largest input we have measured a total of 35919 reads from disk; 35091 writes to disk. Thereby a total 70.1GB was read and 68.5GB was written. These measurements only include temporary file usage (priority queues and sorting); not the reading of input and writing of output. In
Figure \ref{fig:ex_size} we have plotted the results for this experiment.

\begin{figure}
    \centering
    \includegraphics[scale=0.9125]{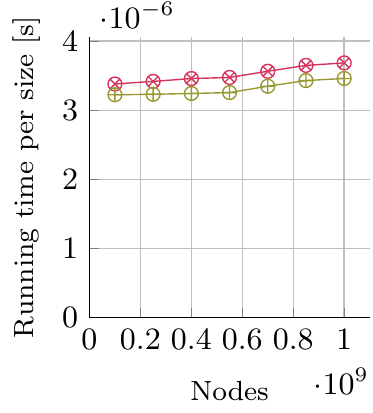}
    \hspace{0.25cm}
    \includegraphics[scale=0.9125]{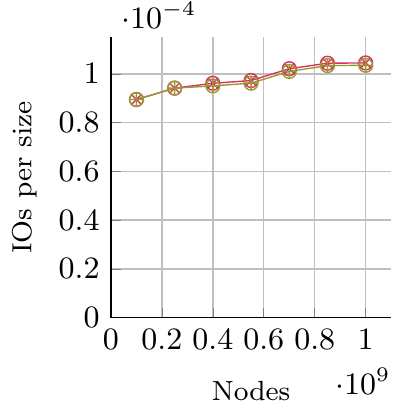}\\
    \includegraphics[scale=0.9125]{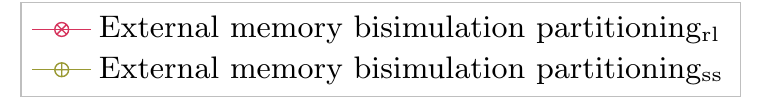}

    \caption{Performance of the bisimulation 
    algorithm from Section~\ref{sec:partitioning} (Experiment 1).
    On the left, running time per node and edge is plotted against the number of
    nodes in the input. On the right, the number of IOs performed per node and
    edge is plotted against the number of nodes in the input. The
    subscript $\mathrm{rl}$ indicates an initial partition based on rank and
    label, the subscript $\mathrm{ss}$ indicates an initial partition based on
    rank, label, and hash value.} \label{fig:ex_size}

\end{figure}

\paragraph*{Experiment 2} In this experiment we measured the performance of the
general bisimulation algorithm from Section~\ref{subsec:stxxl} as a function of
the number of edges in the input graph.  To this end, we created graphs having
$5 \cdot 10^4$ nodes and between $0$ and $1249 \cdot 10^6$ edges, using the
generator as described in Appendix \ref{sec:generate}.  In Figure
\ref{fig:ex_edge} we have plotted the results for this experiment.

\begin{figure}
    \centering
    \includegraphics[scale=0.9125]{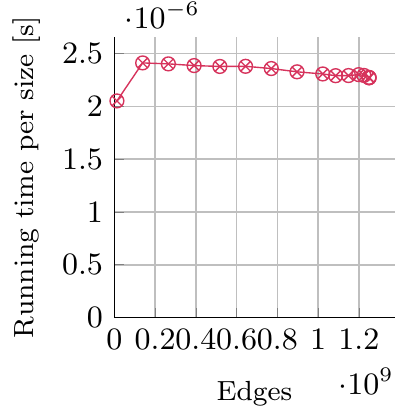}
    \hspace{0.25cm}
    \includegraphics[scale=0.9125]{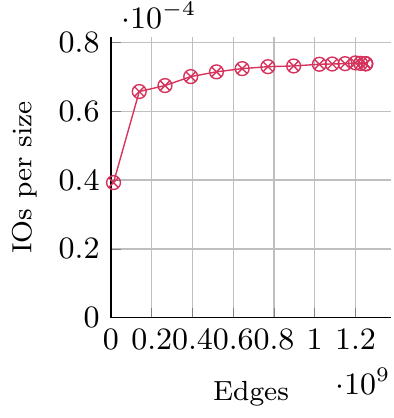}\\
    \includegraphics[scale=0.9125]{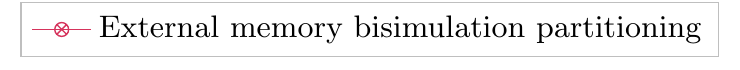}

    \caption{Performance of the bisimulation algorithm from Section~\ref{sec:partitioning} (Experiment 2).
    On the left, running time per node and edge is plotted against the number of
    edges in the input. On the right, the number of IOs performed per node and
    edge is plotted against the number of edges in the input.}
    \label{fig:ex_edge}

\end{figure}

\paragraph*{Experiment 3} In this experiment we measured the performance the
general bisimulation algorithm of Section \ref{subsec:stxxl} on a single graph
as a function of the amount of available memory (per data structure). For this
experiment we fixed a single graph with $10^8$ nodes and $3.3 \cdot 10^8$ edges,
generated as described in Appendix \ref{sec:generate}.  On this graph we
performed external memory bisimulation partitioning, using versions of the
algorithm from Section~\ref{sec:partitioning} constrained to a limited memory
usage. We used values between 12\,MB and 1.5\,GB for the amount of memory the
algorithm is allowed to use. In Figure \ref{fig:ex_mem} we have plotted the
results for this experiment.

\begin{figure}[t]
    \centering
    \includegraphics[scale=0.9125]{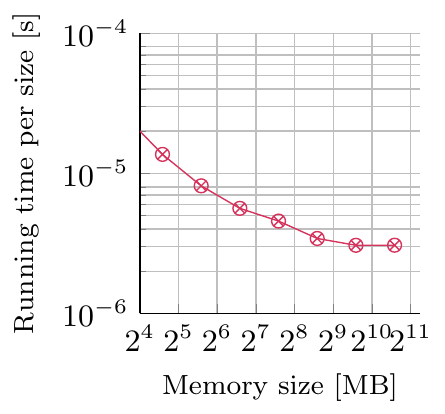}
    \includegraphics[scale=0.9125]{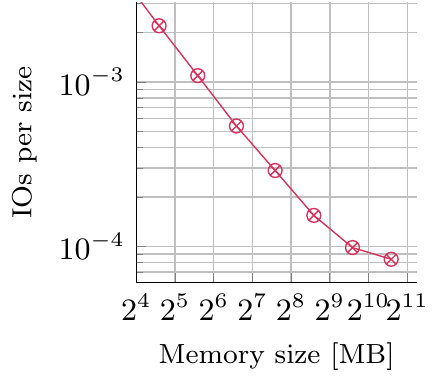}\\
    \includegraphics[scale=0.9125]{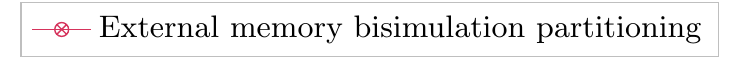}

    \caption{{Impact of available internal memory on the performance of the bisimulation algorithm
    from Section~\ref{sec:partitioning} (Experiment 3).  On the left, running time per node and edge is plotted against
    the amount of available memory. On the right, the number of IOs
    performed per node and edge is plotted against the amount of
    available memory. Note that the amount of available memory excludes stack space used by local variables and the memory used for buffers (256MB in total).}
    }\label{fig:ex_mem}

\end{figure}

\paragraph*{Experiment 4} In this experiment we compared the performance of, on
one hand, the general DAG bisimulation algorithm from
Section~\ref{subsec:stxxl}, and, on the other hand, the specialized algorithm
from Section~\ref{sec:xmlbbisim} for 1-index construction on XML documents. The
performance of both algorithms is measured as a function of the size of the
input graph. For this experiment we created XML documents using the {\tt xmlgen}
program provided by the XML Benchmark Project.\footnote{We have used version
0.92 of {\tt xmlgen}, see {\tt http://www.xml-\\benchmark.org/} for details.}
For the generation of XML documents we have used scaling factors between $50$
and $500$, resulting in documents with sizes between 5.6GB ($10^8$ nodes) and
55.8GB ($10^9$ nodes). In Figure \ref{fig:ex_xml} we have plotted the results
for this experiment.

\begin{figure}
    \centering
    \includegraphics[scale=0.9125]{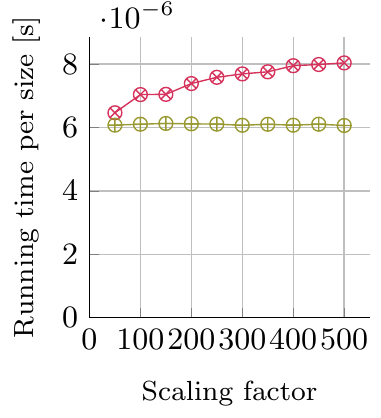}
    \hspace{0.25cm}
    \includegraphics[scale=0.9125]{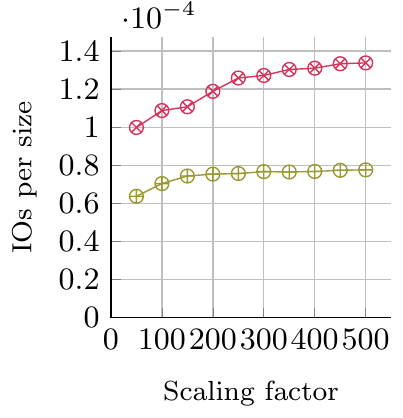}\\
    \includegraphics[scale=0.9125]{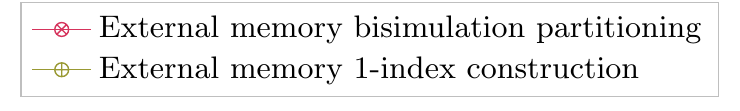}

    \caption{Comparing the performance of the DAG bisimulation algorithm of Section \ref{sec:partitioning} and
    the 1-index algorithm of Section~\ref{sec:xmlbbisim} (Experiment 4).
    On the left, running time per node is plotted as a function of the scaling
    factor. On the right, the number of IOs performed per node is plotted as
    a function of the scaling factor.} \label{fig:ex_xml}

\end{figure}

\paragraph*{Analysis of results} The experiments all show that under all tested
conditions, the general algorithm from Section~\ref{sec:partitioning} is and stays
IO-efficient, even when available memory is artificially limited or when the
number of nodes and edges is very high.  We also see from Experiment 4 that
specializations of our algorithm can outperform the general algorithm with a
good margin.  In particular, we were able to process an 55.8\,GB XML document of
$10^9$ nodes, generated by software from the XMark XML benchmark project, in 104
minutes on a standard laptop with a standard hard disk.

Experiment 1 further shows that a good initial partition (by rank, label and
hash value) improves performance over the less-refined initial partitions (by
rank and label only). A deeper look into the results of this experiment show
that this performance improvement is due to a high increase in the number of
partition blocks in the input for the second phase. This is as expected and does
partly account for the improvement of performance. The results also show a
reduction of the collisions on the secondary hash values that are computed from
bisimilarity families (as explained in Section~\ref{subsec:stxxl}) --- in fact,
collisions were completely eliminated~\cite{thesis}.

Experiment 3 shows that the algorithm does benefit from having more memory at
its disposal.  However, the impact of an increase in available memory becomes
less significant for larger amounts of available memory. 

From the results of the experiments and from the structure of the algorithm we do not expect that certain types of DAGs will have a much better performance than others. An in-depth look into the running time performance shows that it is mainly dominated by the first phase; and within this phase the majority of time is spend on sorting and renumbering the entire graph (last lines of Algorithm \ref{alg:sortbyrank} and Algorithm \ref{alg:sortbyhash}). This sorting and renumbering is unaffected by any particular graph structures.

\section{Concluding Remarks}\label{sec:conclude}

In this paper we have developed the first IO-efficient
bisimulation partitioning algorithm for DAGs.  We also developed specializations
of our general algorithm to compute well-known variants of bisimulation for
disk-resident XML data.   We have complemented our theoretical analysis of these
algorithms with an empirical investigation which established their practicality
on graphs having billions of nodes and edges.

The proposed algorithms are simple enough for practical implementation and use,
for example in the design and
implementation of scalable indexing data structures to facilitate efficient
search  and query answering in a wide variety of real-world applications of
DAG-structured data, as discussed in Section \ref{sec:intro}.

\ignore{
The first goal of our work was the development of an external memory bisimulation
partition algorithm, which to our knowledge has not been studied before.
To this end, we have developed an IO-efficient external memory
bisimulation partitioning algorithm for DAGs.  We also developed specializations
of our general algorithm to variants of bisimulation on XML data proposed in the
literature.   We have complemented our theoretical analysis of these algorithms
with an empirical investigation which established their practicality.

algorithm
tries to decides for each node in which bisimulation partition block it should
be placed; this by utilizing only the information gathered on the node. For
making this decision we have introduced the partition decision structure. We
have analyzed access patterns to this partition decision structure; and we have
introduced several hash-based techniques to optimize these access patterns. This
leads to an expected IO efficient bisimulation partitioning algorithm. We have
shown how this algorithm can be specialized for indexing XML documents and small
scale experiments have shown that the algorithm is fast in practice.

The second goal of our work was the investigation of partition maintenance in an
external memory setting. For partition maintenance we have provided theoretical
analysis of the worst case complexity.  We have proven that the lower bound on
the cost for performing index updates for adding or removing a subgraph $G_s$
from a graph index is $O(|N_s| + |E_s|)$. For edge changes we have proven that
the lower bound on the worst case index update cost is $\Theta(|N| + |E|)$. These
results rules out a general and practically fast algorithm for edge updates,
indicating that further research is required on heuristics for efficient updates
on practical data sets.

>From these results, we can conclude that our goals have been reached; we have
developed an IO efficient bisimulation partitioning algorithm and we have
empirically validated its performance;  and, we have
provided bounds on the complexity of partition maintenance.
}

\paragraph*{Future work} The conceptual and practical results developed here
pave the way for a variety of further investigations.  We conclude the paper
with a brief discussion of some of these.

\emph{Generalizing bisimulation partitioning.}  DAGs are adequate for
representing XML data and other practical types of hierarchical data. However,
for some applications, including querying RDF graphs and general graph
databases, cycles in the data are common. Looking at the current state of
general external-memory graph algorithms \cite{meyer}, it is not clear that
solutions for IO-efficient bisimulation partitioning on cyclic graphs are likely
to exist. One can however focus research on heuristic approaches to achieve
acceptable performance in many cases, as is common for many external-memory
algorithms (e.g., \cite{toposort}).   Extending our algorithms with such
heuristics to handle cycles is an interesting direction for further study.

\emph{Partition maintenance.} One can expect that a practical data set might be
subject to modifications over time.  Upon modification, it becomes necessary to
update any bisimulation partition maintained on the data.  Of course, this
maintenance can be performed by throwing out the old partition and computing a
new one from scratch.  It is easy to show that, in the worst-case, partition
maintenance can indeed be as expensive as calculating a new partition from
scratch.  In many practical cases, however, such a drastic approach is
avoidable.  For example, approximations of bisimulation which are cheaper to
maintain may be acceptable.  Internal-memory approaches to incremental partition
maintenance in this spirit have been proposed, e.g.,
\cite{sccbisim,kaushik,incbisim}.  Studying such practical maintenance schemes
for disk-resident data is another interesting direction for future research.

\emph{Practical output formatting.}  In the empirical validation of our
algorithms, we have not considered any particular output format.  An interesting
research problem is to consider adapting our algorithms such that their output
is usefully structured for some intended applications.  For example,  on XML
documents one can explore the combination of our algorithms with the on-disk
data structure studied by Wang et al.\ \cite{fbdisk}.


%
\bibliographystyle{abbrv}
\bibliography{arxiv}  
%
%
\appendix

\section{Generating benchmark data}\label{sec:generate}

Developed as part of our open-source experimental framework \cite{thesis}, the
{\tt gen} program is a benchmark graph generator. The program can be configured
to create random DAGs, trees, chains and transitive-closure chains, with control
of basic features such as node label assignment and graph size.  We used {\tt
gen} to generate the input to Experiments 1--3, discussed in Section
\ref{sec:exp}.  The generator does not try to represent any particular class of
graph structures, instead focusing on the worst-case scenario of random
structure, to stress-test our algorithms.

The program uses a direct approach for generating the input for Experiment 1.
First, {\tt gen} creates $n$ nodes. To each node $v$, {\tt gen} assigns a label
from a limited set of labels that depends on $n$. Then {\tt gen} selects
children to be connected to $v$ by repeatedly flipping a coin that comes up
heads with a certain probability $p$, that is given as a parameter to {\tt
gen}. Whenever the coin comes up heads, a new child for $v$ is selected from the
nodes that were generated before $v$; if the new child was already a child of
$v$, it is ignored. As soon as the coin comes up tails, our program {\tt gen}
stops selecting children for $v$, and moves on to generating the next node.


The program uses a slightly different approach for generating the random graphs
used in Experiment 2. Again, {\tt gen} creates $n$ nodes and assigns labels in
the way described above. To create edges, {\tt gen} considers every pair of
nodes $u,v$ such that $u$ was generated before $v$, and creates an edge from $u$
to $v$ with probability $p$.

For Experiment 3 we use the same approach as for Experiment 1.

\balance
\end{document}